\documentclass[11pt]{article}
\usepackage[margin=1in]{geometry}
\usepackage{amsmath,amssymb,amsthm,mathtools}
\usepackage[T1]{fontenc}
\usepackage{lmodern}
\usepackage[hidelinks]{hyperref}

\usepackage{natbib}
\usepackage{tikz}
\usetikzlibrary{positioning,arrows.meta,calc}
\newtheorem{theorem}{Theorem}
\newtheorem{lemma}[theorem]{Lemma}
\newtheorem{proposition}[theorem]{Proposition}
\newcommand{\MMS}{\operatorname{MMS}}
\newcommand{\ind}{\mathbf 1}
\newcommand{\cC}{\mathcal C}
\newcommand{\alloc}{\mathcal{A}}
\newcommand{\items}{M}
\newcommand{\reals}{\mathbb{R}}
\usepackage[]{color-edits}
\addauthor{TE}{red}

\usepackage{listings}

\usepackage{orcidlink}

\title{Improved Impossibility Bounds for Maximin Share Allocations}
\author{%
  \normalsize\mdseries
  Tomer Ezra\,\orcidlink{0000-0003-0626-4851}\thanks{%
    Tel Aviv University, Tel Aviv, Israel.
    Email: \texttt{tomerezra@tauex.tau.ac.il}.%
  }
  \and
  \normalsize\mdseries
  Tamar Garbuz\,\orcidlink{0009-0009-3922-4160}
}
\date{}
\begin{document}
\maketitle
\begin{abstract}
The maximin share (MMS) is a central fairness benchmark for allocating
indivisible items, but it need not be simultaneously attainable even
under additive preferences. While extensive work has developed
approximation guarantees, quantitative impossibility bounds have
received comparatively little attention. We establish improved
asymptotic and constant impossibility bounds for both goods and chores.

For every sufficiently large number $n$ of agents, we construct
additive goods instances in which every allocation gives some agent
at most a $1-\Omega((\log n)^{-2})$ fraction of her MMS. This
strengthens the $1/n^4$ shortfall of Feige, Sapir, and Tauber (2021)
to an inverse-polylogarithmic shortfall, an exponential improvement
on the logarithmic scale of $n$. For chores, we construct instances
in which every allocation gives some agent cost at least a
$1+\Omega((\log n)^{-2})$ factor of her MMS. Consequently, for every
fixed $\varepsilon>0$, guarantees of $1-O(n^{-\varepsilon})$ for goods
and $1+O(n^{-\varepsilon})$ for chores are impossible.

We also give four-agent, eleven-item instances that improve the
universal impossibility bounds from $39/40$ to $20/21$ for goods and
from $44/43$ to $31/30$ for chores.
\end{abstract}

\section{Introduction}

Fair division studies how to allocate resources among agents with
different preferences. For $n$ agents with equal entitlements, a
natural starting point is proportionality, which requires each agent
to receive at least a $1/n$ fraction of her value for the entire set
of goods.
With indivisible items, however, this requirement can be infeasible
even when all agents have identical valuations. A single valuable
item shared among two agents is the simplest example. The challenge
is therefore to identify fairness benchmarks that account for
indivisibility while retaining a meaningful individual guarantee.

The \emph{maximin share} (MMS), introduced by
\citet{budish2011combinatorial}, provides such a benchmark. An agent
partitions the goods into $n$ bundles and then receives a least-valued
bundle according to her own valuation. Her MMS is the largest value
she can secure in this procedure. The benchmark depends only on her
own preferences and the available goods, and is always simultaneously
attainable when all agents have identical valuations. With different
valuations, however, the partitions witnessing the agents' individual
guarantees may be incompatible. Indeed,
\citet{kurokawa2018fair} showed that an MMS allocation need not exist
even for three agents with additive valuations. This leads to a
quantitative question: what fraction of every agent's MMS can always
be guaranteed simultaneously?

A substantial literature has developed increasingly strong answers on
the positive side. Starting from the $2/3$ guarantee of
\citet{kurokawa2018fair}, subsequent work improved the computational
efficiency and simplified the algorithms for obtaining approximate
MMS allocations
\citep{amanatidis2017approximation,barman2020approximation,
garg2018approximating}.
\citet{ghodsi2018fair} established a $3/4$ guarantee, and
\citet{garg2021improved} gave a strongly polynomial-time algorithm for
$3/4$-MMS allocations as well as an existence guarantee of
$3/4+1/(12n)$. Further refinements improved the dependence on $n$ and
simplified the underlying arguments
\citep{akrami2023simplification,akrami2023improving}.
\citet{akrami2024breaking} obtained the first constant improvement
beyond $3/4$, proving a guarantee of $3/4+3/3836$.
More recently, \citet{heidari2026improved} established a $10/13$
guarantee, and \citet{huang2025fptas} improved this to $7/9$, together
with an algorithm computing a $(7/9-\varepsilon)$-MMS allocation in
time polynomial in the input size and $1/\varepsilon$.

The analogous problem for indivisible \emph{chores} replaces values
by nonnegative additive costs. Each agent seeks to minimize her cost;
her MMS is the smallest possible maximum bundle cost in a partition
of the chores into $n$ bundles. Thus, an approximate MMS allocation
should give every agent cost at most a factor $\beta\ge 1$ of her MMS.
\citet{aziz2017algorithms} showed that exact MMS allocations may fail
to exist and provided a factor-$2$ approximation. This factor was
improved to $4/3$ by \citet{barman2020approximation}, to $11/9$ by
\citet{huang2019algorithmic}, and to $13/11$ by
\citet{huang2023reduction}. The latter work also gives an algorithm
for $(13/11+\varepsilon)$-MMS allocations with running time polynomial
in the input size and $1/\varepsilon$. Connections to job scheduling
and bin packing have been instrumental in these developments
\citep{huang2019algorithmic,huang2023reduction}.

MMS fairness has also been studied beyond multiplicative approximation.
Related directions include ordinal relaxations of the MMS benchmark
\citep{hosseini2022ordinalgoods,hosseini2022ordinalchores},
guaranteeing MMS to a subset of the agents
\citep{hosseini2021guaranteeing}, and share-based fairness under unequal
entitlements \citep{babaioff2024fair}. Our focus is on the simultaneous
attainability of multiplicative MMS guarantees in the basic additive
setting, with equal entitlements and without computational
restrictions.

In contrast to the extensive progress on positive guarantees, upper
bounds on the attainable fraction of MMS for goods, and the
corresponding impossibility bounds for chores, have received
comparatively little attention. A notable exception is the work of
\citet{feige2021tight}, who constructed an instance with three agents
and nine goods in which every allocation gives some agent value at
most $39/40$ of her MMS. For chores, they constructed an instance
with three agents and nine chores in which every allocation gives
some agent cost at least $44/43$ of her MMS. These examples demonstrate
that exact MMS fairness can fail by a nonzero constant, even in small
instances. They do not, however, resolve how much of the gap between
the positive guarantees and exact MMS fairness is unavoidable.

The dependence on the number of agents raises a distinct question.
A counterexample with a fixed number of agents bounds the approximation
factor that can hold uniformly over all instances, but does not by
itself quantify the unavoidable loss for every sufficiently large $n$.
For each $n\ge 4$, \citet{feige2021tight} constructed an additive goods
instance in which every allocation gives some agent at most a
$1-1/n^4$ fraction of her MMS.

To frame the asymptotic question, \citet{feige2021tight} consider the
worst-case relative MMS shortfall $\delta_n$, so that $1-\delta_n$
is the largest fraction of MMS that can be guaranteed simultaneously
to all agents in every $n$-agent additive goods instance. They
explicitly ask:
\begin{quote}
\emph{Does $\delta_n$ tend to zero as the number of agents grows?}
\end{quote}
Equivalently, can every agent be guaranteed a $1-o(1)$ fraction of
her MMS as $n$ grows, uniformly over all additive goods instances?
Their construction establishes $\delta_n\ge 1/n^4$, but does not rule out convergence to $1$. Beyond whether the shortfall vanishes,
this raises the quantitative question of how quickly it could vanish,
and hence how severe the failure of MMS fairness can be for large
populations.

In this paper, we strengthen the impossibility bounds for both goods
and chores. We make quantitative progress on the question of
\citet{feige2021tight} by strengthening their inverse-polynomial lower
bound on the worst-case MMS shortfall to an inverse-polylogarithmic
one, an exponential improvement when measured on the logarithmic
scale of $n$. We also obtain an asymptotic impossibility result for
chores and improve the constant impossibility bounds in both settings.

\subsection{Model}

We consider a set of $n$ agents, indexed by $[n]=\{1,\ldots,n\}$,
and a set $\items$ of $m$ indivisible items. We study the allocation
of goods and chores separately, beginning with the goods setting.
Each agent $i\in[n]$ has a nonnegative additive valuation
$v_i:2^{\items}\rightarrow\reals_{\geq 0}$. Writing
$v_i(g)$ for $v_i(\{g\})$, additivity means that
\[
    v_i(S)=\sum_{g\in S}v_i(g)
    \qquad \text{for every } S\subseteq\items.
\]
An instance is specified by $I=(\items,(v_i)_{i\in[n]})$.

An allocation is an ordered partition
$\mathcal B=(\mathcal B_1,\ldots,\mathcal B_n)$ of $\items$
into $n$ pairwise disjoint, possibly empty bundles, where
$\mathcal B_i$ is assigned to agent $i$. In particular, every
item must be allocated. We denote the set of all such allocations
by $\alloc_n(\items)$.

\paragraph{Maximin share.}
The \emph{maximin share} (MMS) of agent $i$ is
\[
    \MMS_i
    =
    \max_{(\mathcal B_1,\ldots,\mathcal B_n)\in\alloc_n(\items)}
    \min_{t\in[n]}v_i(\mathcal B_t).
\]
Thus, $\MMS_i$ is the largest value that agent $i$ can guarantee
by partitioning the items into $n$ bundles and receiving a
least-valued bundle according to her valuation.

\paragraph{Proportional share.}
The \emph{proportional share} of agent $i$ is
\[
    \mathrm{PROP}_i=\frac{v_i(\items)}{n}.
\]
An allocation $\mathcal B$ is \emph{proportional} if
$v_i(\mathcal B_i)\geq\mathrm{PROP}_i$ for every agent $i\in[n]$.
Since the least-valued bundle in any partition is worth at most
the average bundle value, we have
\[
    \MMS_i\leq\mathrm{PROP}_i
    \qquad \text{for every } i\in[n].
\]

\paragraph{MMS approximation.}
For $\alpha\in[0,1]$, an allocation $\mathcal B$ is
\emph{$\alpha$-MMS} if
\[
    v_i(\mathcal B_i)\geq\alpha\MMS_i
    \qquad \text{for every } i\in[n].
\]
For each number of agents $n$, we define the worst-case MMS
approximation ratio by
\[
    \alpha_n^{\mathrm{MMS}}
    =
    \inf_I
    \max_{(\mathcal B_1,\ldots,\mathcal B_n)\in\alloc_n(\items)}
    \min_{i\in[n]}
    \frac{v_i(\mathcal B_i)}{\MMS_i},
\]
where the infimum ranges over all additive goods instances $I$
with $n$ agents, any finite number of items, and $\MMS_i>0$ for
every agent $i\in[n]$. Equivalently, $\alpha_n^{\mathrm{MMS}}$
is the largest $\alpha$ such that every such instance admits an
$\alpha$-MMS allocation. We further define
\[
    \alpha^{\mathrm{MMS}}
    =
    \inf_{n\geq 1}\alpha_n^{\mathrm{MMS}},
\]
the worst-case MMS approximation ratio over all numbers of agents.

\paragraph{Chores.}
In the chores setting, each agent $i\in[n]$ has a nonnegative
additive cost function
$c_i:2^{\items}\rightarrow\reals_{\geq 0}$, and an instance is
specified by $I=(\items,(c_i)_{i\in[n]})$. As with valuations,
we write $c_i(g)$ for $c_i(\{g\})$, and additivity means that
\[
    c_i(S)=\sum_{g\in S}c_i(g)
    \qquad \text{for every } S\subseteq\items.
\]
Allocations are defined as above; in particular, every chore
must be allocated.

The \emph{maximin share} of agent $i$ for chores is
\[
    \MMS_i^{\mathrm{ch}}
    =
    \min_{(\mathcal B_1,\ldots,\mathcal B_n)\in\alloc_n(\items)}
    \max_{t\in[n]}c_i(\mathcal B_t).
\]
Thus, $\MMS_i^{\mathrm{ch}}$ is the smallest cost that agent $i$
can guarantee by partitioning the chores into $n$ bundles and
receiving a highest-cost bundle according to her cost function.

The \emph{proportional share} of agent $i$ for chores is
\[
    \mathrm{PROP}_i^{\mathrm{ch}}=\frac{c_i(\items)}{n}.
\]
A chores allocation $\mathcal B$ is \emph{proportional} if
$c_i(\mathcal B_i)\leq\mathrm{PROP}_i^{\mathrm{ch}}$ for every
agent $i\in[n]$. Since the highest-cost bundle in any partition
has cost at least the average bundle cost, we have
\[
    \MMS_i^{\mathrm{ch}}\geq\mathrm{PROP}_i^{\mathrm{ch}}
    \qquad \text{for every } i\in[n].
\]

\paragraph{MMS approximation for chores.}
For $\beta\geq 1$, a chores allocation $\mathcal B$ is
\emph{$\beta$-MMS} if
\[
    c_i(\mathcal B_i)\leq\beta\MMS_i^{\mathrm{ch}}
    \qquad \text{for every } i\in[n].
\]
For each number of agents $n$, we define the worst-case MMS
approximation factor for chores by
\[
    \beta_n^{\mathrm{MMS}}
    =
    \sup_I
    \min_{(\mathcal B_1,\ldots,\mathcal B_n)\in\alloc_n(\items)}
    \max_{i\in[n]}
    \frac{c_i(\mathcal B_i)}{\MMS_i^{\mathrm{ch}}},
\]
where the supremum ranges over all additive chores instances $I$
with $n$ agents, any finite number of chores, and
$\MMS_i^{\mathrm{ch}}>0$ for every agent $i\in[n]$.
Equivalently, $\beta_n^{\mathrm{MMS}}$ is the smallest $\beta$
such that every such instance admits a $\beta$-MMS allocation.
We further define
\[
    \beta^{\mathrm{MMS}}
    =
    \sup_{n\geq 1}\beta_n^{\mathrm{MMS}},
\]
the worst-case MMS approximation factor for chores over all
numbers of agents. In either setting, an \emph{MMS allocation}
is an allocation with approximation factor $1$.

Throughout the paper, all logarithms are to base $2$.

\subsection{Our Contribution}
We establish two improved impossibility results for additive goods and corresponding results for additive chores.
The first strengthens the unavoidable shortfall from inverse polynomial gap to inverse polylogarithmic in the number of agents. Our second result improves the constant upper bound on the universal
MMS approximation ratio.

\paragraph{An inverse-polylogarithmic shortfall.}
Our first result shows that, for every sufficiently large number $n$
of agents, there exists an instance in which every allocation gives
some agent at most a $1-\Omega((\log n)^{-2})$ fraction of her MMS.

\begin{theorem}\label{thm:main}
For every sufficiently large integer $n$,
\[
\alpha_n^{\mathrm{MMS}}
\le 1-\Omega\bigl((\log n)^{-2}\bigr).
\]
The bound is witnessed by an additive goods instance with
$n+O(\log n)$ items.
\end{theorem}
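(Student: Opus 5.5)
We will build the instance by embedding a small "hard core" inside a large population of agents who are forced to take singletons. Let $k=\Theta(\log n)$. The items are of two kinds. There are $n-k$ \emph{large} items. There are also $O(k)$ \emph{core} items, so the total is $n+O(\log n)$ items.

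\textbf{Valuations.} The agents are split into $n-k$ padding agents and $k$ core agents, and we tune values so that every agent's MMS is normalized to $1$.

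A padding agent values each large item at about $1$ and the core items at small values. Her MMS partitions are then essentially forced: each large item is a singleton, and the core items are grouped into $k$ bundles of value $\approx 1$.

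The core agents are where a Feige--Sapir--Tauber (FST)-style incompatibility is encoded. Each core agent's valuation is a perturbation of a common base valuation by terms of magnitude $\epsilon=\Theta(1/k^2)$. Along the lines of the $3\times 3$ matrix gadgets of \citet{feige2021tight}, we index the core items by pairs (row, column) or by a structure with $O(k)$ items. Each core agent's perturbation is a binary pattern, i.e.\ a codeword over $O(\log n)$ coordinates. This lets the $n-k$ padding agents carry distinct patterns too, and that is what makes $k$ logarithmic rather than constant.

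\textbf{Step 1 (MMS values).} For each agent, we exhibit an explicit partition into $n$ bundles of value at least $1$, which gives $\MMS_i\ge 1$. Since every agent's total value is $n$, we also get $\MMS_i\le \mathrm{PROP}_i=1$, so $\MMS_i=1$ exactly. This step is a direct check.

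\textbf{Step 2 (structure of near-MMS allocations).} Fix any allocation in which every agent receives at least $1-c\epsilon$. We use a counting/pigeonhole argument based on the item count $n+O(\log n)$ and on large items being worth nearly $1$ to everyone. It shows that all but $O(k)$ agents receive exactly one large item. Moreover, the core items must be split among the remaining agents in bundles that closely mimic some agent's MMS partition. Any deviation, such as giving two large items to one agent or merging core bundles, leaves some agent with value below $1-\Omega(1)$.

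\textbf{Step 3 (parity/combinatorial contradiction).} Valuations on the core are designed so that a bundle is worth at least $1-c\epsilon$ to agent $i$ only if it is "consistent" with $i$'s pattern. The perturbations are chosen as codewords so that no assignment of consistent bundles to the agents who receive core items exists. Such an assignment would amount to simultaneously satisfying a family of linear constraints, for example a row/column sum parity that is odd on one side and even on the other. The contradiction forces some agent to lose at least one unit of perturbation, i.e.\ $\Omega(\epsilon)=\Omega((\log n)^{-2})$.

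\textbf{Main obstacle.} The delicate part is the quantitative tradeoff in Steps 2--3. The perturbations must be large enough that a single inconsistency costs $\Omega(1/k^2)$. At the same time, each agent's own MMS partition must remain exactly balanced, and the accumulated perturbations over $O(k)$ items must not let a cleverly mixed bundle reach value $1-o(\epsilon)$. The $k^2$ arises because each core bundle contains $O(k)$ items with perturbations of size $1/k^2$, which keeps the total deviation within a bundle $O(1/k)$.

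I would first try a $k\times k$ grid of core items with row and column perturbations of size $1/k^2$, following FST's construction. I would then verify by LP duality (weighted averaging of the bundle inequalities) that no consistent assignment exists.
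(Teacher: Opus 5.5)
\textbf{There is a genuine gap.} Your proposal is a plan rather than a proof. The decisive Step~3, that the core valuations are \emph{designed so that} no consistent assignment exists, is asserted without any construction.

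The architecture around it also has a structural flaw. You separate $n-k$ padding agents from $k$ core agents, but the allocation, not you, chooses who receives the core items. In your design, large items are worth about $1$ to everyone, and each padding agent's MMS partition groups the core into $k$ bundles of value about $1$. So an allocation may give the core bundles to padding agents and the large items to the core agents, bypassing any FST-style gadget among the core agents. This fails only if the padding agents' core valuations are themselves mutually incompatible, in which case they are not padding at all. The incompatibility must therefore hold for every choice of recipients.

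This is why the paper drops the core/padding distinction entirely. Each of the $n$ agents gets a distinct codeword $S$ from a balanced code of length $k=8\lceil\log n\rceil$ with minimum distance $k/4$, and the contradiction is pairwise. The goods $P$ and $Q$ are both high goods. So in any allocation giving everyone at least $\mu-\ell$, they land in the bundles of two different agents, and a counting argument shows $|S_Q\setminus S_P|\le 4d+\ell\le 5k/24<k/4$. Your remark that padding agents \emph{carry distinct patterns too} points this way, but it never enters your contradiction.

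The quantitative mechanism is also misidentified. A parity contradiction that extracts \emph{one unit of perturbation} exploits only distinctness of patterns, not the code's distance. In the paper's construction, the same argument with merely distinct types would certify a loss of a single unit out of $\mu=204k^3$, i.e.\ order $1/k^3$. The linear distance is what lets the tolerated deficit be $\ell=k/24$, a linear number of unit corrections, giving $\ell/\mu=\Theta(1/k^2)$.

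Making this work needs the rigidity you postpone:
\begin{itemize}
\item Values are $3w(g)+\rho_S(g)$, with every auxiliary weight a multiple of $k$.
\item Once each bundle is shown to contain exactly one high good, its total correction is at most $k+1$.
\item Hence any bundle of auxiliary weight below $T$ loses at least $2k-1$, forcing every bundle to have weight exactly $T$.
\item Lemma~\ref{lem:pairs} then pins down the low goods.
\end{itemize}
Your Step~2 claim that every deviation costs $\Omega(1)$ is unjustified. Here, shifting a bundle's auxiliary weight by $k$ costs only $\Theta(1/k^2)$ relatively, and the proof must operate at that scale.

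Finally, the $k\times k$ grid you propose would use $\Theta(\log^2 n)$ core items, contradicting the $n+O(\log n)$ item count in the statement.
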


This result strengthens the inverse-polynomial shortfall established
by \citet{feige2021tight} to an inverse-polylogarithmic shortfall. In particular,
it rules out a $1-O(n^{-\varepsilon})$ MMS guarantee for every fixed
$\varepsilon>0$. The construction satisfies
$\MMS_i=\mathrm{PROP}_i$ for every agent $i$. Thus, the impossibility
persists even in instances where every agent can individually
partition the items into $n$ bundles of equal value. 

Our construction assigns agents distinct types using a
constant-weight error-correcting code of length $\Theta(\log n)$.
Each valuation combines common auxiliary item weights with small
type-dependent corrections. The auxiliary weights force any
allocation that gives every agent sufficiently close to her MMS
to have a restricted bundle structure. The corrections then imply
that the codewords associated with two distinct recipients must
be too close, contradicting the minimum distance of the code.
Quantitatively, every agent has MMS of $\Theta((\log n)^3)$, whereas
every allocation gives some agent an additive shortfall of
$\Omega(\log n)$, yielding the claimed relative loss.

\paragraph{An improved constant upper bound.}
Our second result improves the upper bound on the universal MMS
approximation ratio from $\frac{39}{40}$, established by
\citet{feige2021tight}, to $\frac{20}{21}$.

\begin{theorem}\label{thm:main2}
It holds that
\[
    \alpha^{\mathrm{MMS}}
    \leq \frac{20}{21}.
\]
\end{theorem}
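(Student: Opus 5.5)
Since $\alpha^{\mathrm{MMS}}=\inf_n\alpha_n^{\mathrm{MMS}}\le\alpha_4^{\mathrm{MMS}}$, it suffices to exhibit a single additive goods instance with $n=4$ agents and eleven items, as announced in the abstract, in which every allocation gives some agent value at most $\frac{20}{21}\MMS_i$. I would normalize so that every agent has $\MMS_i=21$ and take integer item values. It then suffices to show that no allocation gives every agent value at least $20$. Since values are integers, this is the same as showing that in every allocation some agent receives at most $19$, i.e., strictly less than $20$.

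The instance would follow the design of \citet{feige2021tight}. All agents share a common base matrix of item values, chosen so that the eleven items split into four bundles of value exactly $21$ in many ways. Each agent then gets a small perturbation $\pm1$ on a few items. I would set the total value to $84$ for every agent, so that $\MMS_i=\mathrm{PROP}_i=21$. Each agent's perturbation is designed to keep one specific perturbed partition into four bundles of value $21$, which certifies $\MMS_i\ge 21$; the bound $\MMS_i\le 21$ is immediate from $\MMS_i\le\mathrm{PROP}_i$. With four bundles, eleven items and values around $21$, most bundles hold two or three items. This leaves the designer enough freedom that the four agents' "good" partitions can be made pairwise incompatible.

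To prove that no allocation is $\frac{20}{21}$-MMS, I would argue as follows. Suppose every agent gets at least $20$. Because the total is $84$ for each agent, every agent's bundle, measured by her own valuation, lies between $20$ and $84-3\cdot 20$ measured in other agents' terms, so the slack is tiny. I would then show that any bundle worth at least $20$ to agent $i$ must be, up to the perturbation, one of a short list of item sets. This is a small enumeration over subsets of eleven items, organized by which large items the bundle contains. Finally, I would check that no four pairwise disjoint bundles from the respective lists cover all eleven items while summing consistently.

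The main obstacle is the instance design itself: finding base values and perturbations for which every agent still has a perfect $21$-partition, yet the near-MMS bundles admit no compatible system. I would search for it by computer, using an integer program over small integer matrices, and then verify impossibility by exhaustive case analysis. There are $4^{11}\approx 4\cdot10^6$ allocations, so the check is also machine-verifiable. For a human-readable proof, I would reduce it by case splitting on the owner of the largest item.
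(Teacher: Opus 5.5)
Your skeleton matches the paper's: reduce to $\alpha_4^{\mathrm{MMS}}$, normalize so that $\MMS_i=\mathrm{PROP}_i=21$, certify the MMS with explicit partitions, and verify by exhaustive search over $4^{11}$ allocations. There are two genuine gaps, however.

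\textbf{The threshold is off by one.} With $\MMS_i=21$ and integer values, the theorem needs every allocation to give some agent at most $\frac{20}{21}\cdot 21=20$. That is equivalent to saying no allocation gives every agent at least $21$, i.e.\ the instance has no exact MMS allocation. You instead set out to show that no allocation gives every agent at least $20$, so that some agent always gets at most $19$. That would prove $\alpha^{\mathrm{MMS}}\le 19/21$, a much stronger claim, and your whole case analysis (``suppose every agent gets at least $20$'') uses the wrong threshold. For the instance that actually works, the stronger claim is false. In the paper's instance the exhaustive search reports a best minimum value of exactly $20$. For example, give the two $v_1$-agents $\{a,b,c\}$ and $\{g,h,i\}$ (value $21$ each), and give the $v_2$-agents $\{e,k\}$ (value $20$) and $\{d,f,j\}$ (value $24$). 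A search driven by your criterion would therefore reject the very instances that prove the theorem. Relatedly, your bound $v_i(B_i)\le 84-3\cdot 20$ holds only when all agents share one valuation. With different valuations, $v_i(B_j)$ for $j\neq i$ is not bounded below by the recipient's guarantee.

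\textbf{The proposal never produces an instance.} The content of this theorem is the existence of a concrete instance plus a finite verification. Describing a search procedure does not establish that the search succeeds. The paper supplies the instance: items $a,\dots,k$, two agents with $v_1=(1,17,3,10,1,10,7,7,7,3,18)$ and two with $v_2=(1,14,5,10,1,9,6,7,7,5,19)$, both totaling $84$. The partitions $\{a,b,c\},\{d,e,f\},\{g,h,i\},\{j,k\}$ for $v_1$ and $\{d,g,j\},\{a,e,k\},\{b,h\},\{c,f,i\}$ for $v_2$ certify $\MMS_i=\mathrm{PROP}_i=21$. An exhaustive check of all $4^{11}$ allocations then shows that none gives every agent $21$. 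Note that only two distinct valuations are needed, each shared by two agents, rather than four individually perturbed ones. The differences between them are also not all $\pm 1$; for instance, item $b$ is worth $17$ versus $14$.
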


Consequently, no approximation factor strictly greater than
$\frac{20}{21}$ can be guaranteed for all additive instances.
Whereas Theorem~\ref{thm:main} quantifies the unavoidable shortfall
for every sufficiently large number of agents,
Theorem~\ref{thm:main2} strengthens the constant upper bound on
the guarantee that holds uniformly over all numbers of agents.

\paragraph{Extensions to chores.}
Our constructions also extend to the allocation of indivisible
chores. First, we adapt the proof of Theorem~\ref{thm:main} to
show that an inverse-polylogarithmic excess over the MMS is
unavoidable: for every sufficiently large number $n$ of agents,
there exists an instance in which every allocation gives some
agent cost at least a $1+\Omega((\log n)^{-2})$ factor of her MMS.

\begin{theorem}\label{thm:chores-hardness}
For every sufficiently large integer $n$,
\[
    \beta_n^{\mathrm{MMS}}
    \geq 1+\Omega\bigl((\log n)^{-2}\bigr).
\]
The bound is witnessed by an additive chores instance with
$n+O(\log n)$ items.
\end{theorem}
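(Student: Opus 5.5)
We will mirror the goods construction behind Theorem~\ref{thm:main}, replacing values by costs, and keep the same combinatorial core. Fix a binary constant-weight code $\cC$ of length $L=\Theta(\log n)$, weight $w=\Theta(L)$, minimum distance $d=\Theta(L)$, and at least $n$ codewords; such codes exist by a Gilbert--Varshamov argument. Assign agent $i$ a distinct codeword $x^{(i)}\in\cC$. The item set has $n$ ``main'' items and $O(L)$ auxiliary items, for $n+O(\log n)$ items in total. Each cost function is a common base profile plus a type-dependent correction: $c_i(g)=B(g)+\Delta_i(g)$, with $B$ of order $L^3$ on bundle-level scale and $\Delta_i(g)\in\{0,\pm 1\}\cdot\Theta(1)$ determined by the coordinates of $x^{(i)}$.

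\textbf{Step 1 (MMS computation).} For each $i$, we will exhibit an explicit partition into $n$ bundles of exactly equal cost. Since $\MMS_i^{\mathrm{ch}}\ge\mathrm{PROP}_i^{\mathrm{ch}}$ always holds, this gives $\MMS_i^{\mathrm{ch}}=\mathrm{PROP}_i^{\mathrm{ch}}=\Theta(L^3)$. We will design the corrections so that each agent's balancing partition pairs auxiliary items according to the support of her own codeword. For example, auxiliary item $a_j$ has cost slightly raised or lowered according to $x^{(i)}_j$, and main items compensate through a weight-$w$ constraint.

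\textbf{Step 2 (structure).} Suppose an allocation gives every agent cost at most $\MMS_i^{\mathrm{ch}}+\eta$ with $\eta=o(L)$, small compared with the $\Theta(L)$ gap we target. Using the base profile $B$, whose item sizes are arranged like a bin-packing instance with essentially a unique tight packing, we show the allocation must have the same coarse shape as the balanced partitions. In particular, one bundle, or a bounded number of bundles, receives the auxiliary items, and the allocation is forced into the rigid template. This is the chores analogue of the goods argument. The key difference is that the slack now comes from bundles being too heavy rather than too light, so each auxiliary configuration must be shown to overload some bundle by more than $\eta$.

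\textbf{Step 3 (code contradiction).} Within the forced template, the agent(s) receiving auxiliary items incur cost equal to the base cost plus the sum of their corrections over the assigned subset. Each agent's own corrections sum to zero only on her own codeword pattern. If two distinct agents $i\neq i'$ must both be satisfied within additive slack $\eta$, the assigned pattern lies within Hamming distance $O(\eta)$ of both $x^{(i)}$ and $x^{(i')}$. That contradicts $d=\Theta(L)$ when $\eta\le cL$ for a small constant $c$. So some agent pays at least $\MMS_i^{\mathrm{ch}}+\Omega(L)$. Dividing by $\MMS_i^{\mathrm{ch}}=\Theta(L^3)$ gives $\beta_n^{\mathrm{MMS}}\ge 1+\Omega(L^{-2})=1+\Omega((\log n)^{-2})$. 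Padding handles $n$ not matching $|\cC|$: we use only $n$ codewords, and the item count stays $n+O(\log n)$.

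\textbf{Main obstacle.} The main obstacle is Step 2. In the goods setting, an agent who is short can be blamed for receiving too little, whereas for chores an allocation could spread auxiliary items across many bundles to keep every agent slightly overloaded. I will first try to choose the base sizes so that every main item has cost just above $\mathrm{PROP}/2$. Then no bundle can hold two main items plus anything without exceeding $(1+o(1))\MMS$, which forces essentially one main item per agent. After that, a counting argument on the total leftover auxiliary cost pins down the rest of the template.
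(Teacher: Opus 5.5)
There is a genuine gap: what you give is a template, not a proof. You never specify an instance. The base profile $B$, the corrections $\Delta_i$, and the roles of the ``main'' and auxiliary items are all left to be designed. Step~2 is explicitly deferred as the main obstacle. Step~3 simply asserts that two distinct agents are both measured against one common ``assigned pattern.''

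Producing that coupling is the heart of the goods argument, and it needs specific structure. The paper uses complementary weights $w(X_j)+w(Y_j)=H$, items $P,Q$ of weights $T-U$ and $T-kH+U$ with $U=\sum_{j\in S}x_j$ for every type $S$, and the correction $\rho_S(Q)=-k$. Together these force the $Q$-bundle to hold almost all of its recipient's labeled $Y$-items and the $P$-bundle to hold most of the corresponding $X$-items. That yields $|S_Q\setminus S_P|\le 4d+\ell<k/4$. Nothing in your sketch supplies such a mechanism.

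There is also a quantitative slip. In Step~2 you take $\eta=o(L)$, which would yield only $1+o(L^{-2})$. You need $\eta=cL$ for a fixed constant $c$, as you use in Step~3.

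The missing idea is that the chores case needs no new template analysis: you can reduce it to Proposition~\ref{prop:robust}.
\begin{itemize}
\item Keep the goods instance and negate the corrections, $c_i(g)=3w(g)-\rho_i(g)$. The goods MMS partitions still give every bundle cost $\mu$, so $\MMS_i^{\mathrm{ch}}=\mu$.
\item Suppose every agent paid at most $\mu+\ell$. No bundle can contain two high items, since each costs at least $9T/4-1$. So each bundle contains exactly one and has correction at most $k+1$. Your instinct in the last paragraph is right up to here.
\item The ``counting argument'' then goes through because all auxiliary weights are multiples of $k$. A bundle with $w(A_i)>T$ costs at least $3(T+k)-(k+1)=\mu+2k-1>\mu+\ell$. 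So $w(A_i)\le T$ for all $i$, and $w(M)=nT$ forces $w(A_i)=T$. This granularity is what rules out your worry about spreading auxiliary items so that everyone is slightly overloaded.
\item On bundles of weight exactly $T$ one has $v_i(A_i)=6w(A_i)-c_i(A_i)=2\mu-c_i(A_i)\ge\mu-\ell$. So the same allocation would contradict the goods proposition.
\end{itemize}
Finally, $\ell/\mu=1/(4896k^2)$ gives the $1+\Omega((\log n)^{-2})$ bound.
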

In particular, this result rules out a
$1+O(n^{-\varepsilon})$ MMS approximation for every fixed
$\varepsilon>0$.

We also obtain a $31/30$ lower bound on the universal MMS
approximation factor for chores.
\begin{theorem}\label{thm:constant-chores}
It holds that
\[
\beta^{\mathrm{MMS}} \ge \frac{31}{30}.
\]
\end{theorem}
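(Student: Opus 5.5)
The plan is to exhibit one explicit chores instance with $n=4$ agents and $11$ chores, as the abstract announces, and show that every allocation gives some agent cost at least $\frac{31}{30}$ of her MMS. Since $\beta^{\mathrm{MMS}}\ge\beta_4^{\mathrm{MMS}}$, one such instance is enough. I would normalize every agent's cost function so that $\MMS_i^{\mathrm{ch}}=30$. A natural template is the Feige--Sapir--Tauber one: chores indexed by a grid-like structure, with costs of the form base value plus small row and column perturbations. Each agent's perturbation pattern is chosen so that she has a partition into four bundles of cost exactly $30$, whereas her other "natural" bundles cost $31$ or more. I would take the costs to be integers. Then showing that some agent pays at least $31$ reduces to showing that no allocation gives every agent cost at most $30$, that is, that no exact MMS allocation exists.

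The proof then splits into two parts. \textbf{Certifying the MMS values.} For each agent, I would write down an explicit partition into four bundles of cost at most $30$, which gives $\MMS_i^{\mathrm{ch}}\le 30$. I would also check that $c_i(\items)=120$, so the proportional bound $\MMS_i^{\mathrm{ch}}\ge\mathrm{PROP}_i^{\mathrm{ch}}$ forces equality. \textbf{Proving non-existence.} Since $\mathrm{PROP}_i^{\mathrm{ch}}=30$ and total cost is tight, any allocation with $c_i(\mathcal B_i)\le 30$ for all $i$ must consist of bundles each of which some agent values at most $30$. The key combinatorial step is to enumerate, for each agent, the family $\mathcal F_i$ of subsets of the $11$ chores with $c_i(S)\le 30$. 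Only the bundles that can participate in a covering matter, and I would prune these using cardinality arguments: with $11$ chores and $4$ agents, bundle sizes must follow a pattern like $3,3,3,2$. Large chores shared by all agents force at most one per bundle, which restricts the structure. It then remains to show that $\items$ cannot be written as a disjoint union $S_1\cup S_2\cup S_3\cup S_4$ with $S_i\in\mathcal F_i$.

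The main obstacle is the instance itself: finding costs for which $\mathcal F_1,\dots,\mathcal F_4$ admit no system of disjoint representatives covering $\items$, while each $\mathcal F_i$ individually contains a partition. I would search for the instance computationally, for example with an integer program over small integer perturbations that maximizes the minimax ratio. I would then present the non-existence proof as a case analysis organized by which agent receives the bundle containing a distinguished large chore, using the perturbation signs to derive a contradiction in each case. If the case analysis becomes unwieldy, I would fall back on a verified exhaustive enumeration of all $4^{11}$ allocations. This is finite and checkable, and the paper's inclusion of the \texttt{listings} package suggests it does exactly this.
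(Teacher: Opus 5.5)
There is a genuine gap: the proposal describes how you might find a witness but never exhibits one, and for this theorem the witness is the whole proof. The statement $\beta^{\mathrm{MMS}}\ge 31/30$ is an existence claim. Everything else you list is routine once the costs are fixed: certifying $\MMS_i^{\mathrm{ch}}$ by an explicit partition plus the bound $\MMS_i^{\mathrm{ch}}\ge\mathrm{PROP}_i^{\mathrm{ch}}$, then ruling out cheaper allocations by case analysis or by enumerating all $4^{11}$ allocations. That framework is exactly what the paper uses. ``Search with an integer program'' is not an argument that a suitable instance exists, and nothing in the plan guarantees the search succeeds. Your normalization to integer costs with $\MMS_i^{\mathrm{ch}}=30$ for every agent makes non-existence of an exact MMS allocation force cost $31$. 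But it confines you to a family in which you give no evidence that a $4$-agent, $11$-chore witness exists. The grid-based Feige--Sapir--Tauber template and the $3,3,3,2$ bundle-size pattern are heuristics, not proof steps.

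The paper's witness does not even fit your template. It has two agents with $c_1=(19,40,31,37,16,37,25,40,25,31,59)$ and two with $c_2=(21,45,27,39,15,39,24,45,24,27,54)$ on items $a,\dots,k$, both cost functions summing to $360$. The partitions $\{a,b,c\},\{d,e,f\},\{g,h,i\},\{j,k\}$ and $\{d,g,j\},\{a,e,k\},\{b,h\},\{c,f,i\}$ give every agent MMS $90$. Exhaustive enumeration shows the best achievable maximum cost is $93$. Rescaling to MMS $30$ makes $c_1$ non-integral, since none of its entries is divisible by $3$. So the bound comes from a forced excess of $3$ units at scale $90$, not from integrality at scale $30$. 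To complete your proof, commit to a concrete instance such as this one and verify its MMS partitions and totals. Then compute $\min_{\mathcal B}\max_i c_i(\mathcal B_i)$ over all allocations, or give a complete case analysis. Merely checking that no exact MMS allocation exists suffices only if your costs really are integers at scale $30$.
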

This bound is witnessed by an instance with four agents and eleven
chores. Thus, no approximation factor strictly smaller than $31/30$
can be guaranteed for all additive chores instances.

\subsection{Further Related Work}
\label{sec:related-work}

\paragraph{Restricted instances.}
Early work studied conditions under which exact MMS allocations exist
\citep{bouveret2016characterizing,kurokawa2016can}.
Stronger approximation guarantees are available for three agents
\citep{feige2022improved} and for goods instances with only two distinct
valuation types \citep{shahkar2025improved}. For an ordinal relaxation,
\citet{schwerdtfeger20261} established that a $1$-out-of-$5$ MMS
allocation always exists for four additive agents.
The number of items also affects exact MMS existence.
\citet{feige2021tight} proved that an MMS allocation exists whenever
there are at most $n+5$ goods. More generally,
\citet{hummel2023lower} showed that, for every fixed positive integer
$c$, an MMS allocation exists in every instance with at most $n+c$
items once $n$ is sufficiently large, both for goods and for chores ($n \geq 0.6597^c \cdot c!$ for goods and $n \geq 0.7838^c \cdot c!$ for chores).
Our work complements this result by showing that $n$ must be at least exponential in $c$ for guaranteeing the existence of MMS allocations.

\paragraph{Beyond additive valuations.}
MMS guarantees have also been studied for submodular valuations
\citep{ghodsi2018fair,uziahu2023fair}, fractionally subadditive valuations
\citep{seddighin2024improved,akrami2023randomized}, and subadditive
valuations \citep{seddighin2024improved,feige2025concentration,
seddighin2025beating}. Our results concern additive valuations,
showing that substantial barriers to simultaneous MMS approximation
already arise without nonadditive preferences.

\subsection{Preliminaries}

Our proofs use binary error-correcting codes. A \emph{binary code}
of length $k$ is a set $\cC\subseteq\{0,1\}^k$, whose elements are
called \emph{codewords}. We index the coordinates of each codeword
by $0,\ldots,k-1$. The \emph{Hamming distance} between two words
$a,b\in\{0,1\}^k$ is the number of coordinates in which they differ:
\[
    d_H(a,b)
    =
    \bigl|\{h\in\{0,\ldots,k-1\}:a_h\neq b_h\}\bigr|.
\]
The \emph{minimum Hamming distance} of a code is the minimum of
$d_H(a,b)$ over all distinct codewords $a,b\in\cC$.

The \emph{Hamming weight} of a word is the number of its coordinates
equal to $1$. A code is \emph{constant-weight} if all its codewords
have the same Hamming weight. We use codes whose codewords are
\emph{balanced}, meaning that exactly half of their coordinates
are equal to $1$. For even $k$, let
\[
    L_k
    =
    \left\{
        a\in\{0,1\}^k:
        \sum_{h=0}^{k-1}a_h=\frac{k}{2}
    \right\}
\]
denote the set of balanced words of length $k$.

The following lemma guarantees the existence of exponentially many
balanced codewords with pairwise Hamming distance linear in their
length. We defer its proof to the appendix.

\begin{lemma}\label{lem:code}
For every integer $k\geq128$ divisible by $8$, there exists a code
$\cC\subseteq L_k$ with $|\cC|\geq 2^{k/8}$ and minimum Hamming
distance at least $k/4$.
\end{lemma}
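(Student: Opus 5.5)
We prove Lemma~\ref{lem:code} by a greedy (Gilbert--Varshamov type) argument inside the balanced slice $L_k$. We pick codewords from $L_k$ one at a time, each time choosing any balanced word at Hamming distance at least $k/4$ from all words chosen so far, and we stop when no such word remains. At that point the Hamming balls of radius $k/4-1$ around the chosen codewords cover $L_k$. It therefore suffices to show
\[
    |L_k| \;\ge\; 2^{k/8}\cdot \max_{a\in L_k}\bigl|\{b\in L_k: d_H(a,b)<k/4\}\bigr|,
\]
which forces the greedy code to have at least $2^{k/8}$ codewords. By construction its minimum distance is at least $k/4$.

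For the lower bound on $|L_k|$, we use the standard estimate $\binom{k}{k/2}\ge 2^k/(k+1)$, since it is the largest of the $k+1$ binomial coefficients summing to $2^k$.

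For the upper bound on the ball, we use a crude bound: the number of words at distance less than $k/4$ from $a$, balanced or not, is at most $\sum_{j<k/4}\binom{k}{j}\le 2^{H(1/4)k}$, where $H$ is the binary entropy. A more elementary alternative is the bound $\sum_{j\le \lambda k}\binom{k}{j}\le (e/\lambda)^{\lambda k}$, but $(4e)^{1/4}\approx 1.81$ gives exponent $\approx 0.856$, which also suffices. With $H(1/4)\approx 0.8113$ the required inequality becomes
\[
    \frac{2^k}{k+1}\;\ge\; 2^{k/8}\,2^{0.8113k},
\]
that is, $2^{(1-1/8-0.8113)k}=2^{0.0637k}\ge k+1$. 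At $k=128$ the left side is about $2^{8.15}\approx 285\ge 129$, and the gap only grows with $k$. For a cleaner check we can bound $H(1/4)\le 0.82$, giving $2^{0.055k}\ge k+1$; at $k=128$ this is $2^{7.04}\approx 131\ge129$, and monotonicity follows since the exponential grows faster than $k+1$ for $k\ge128$.

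The only delicate step is this numerical verification near $k=128$, because the margin is small. We will first try the entropy bound $\sum_{j\le \lambda k}\binom{k}{j}\le 2^{H(\lambda)k}$ for $\lambda\le1/2$, together with an explicit numerical upper bound on $H(1/4)=2-\tfrac34\log 3$. This reduces the claim to checking $\log 3\ge 1.58$, so $H(1/4)\le 0.815$, and then verifying the single inequality at $k=128$ plus monotonicity in $k$. Divisibility by $8$ is used only so that $k/2$, $k/4$ and $k/8$ are integers.
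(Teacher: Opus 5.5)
Your proof is correct and is essentially the paper's argument: a greedy Gilbert--Varshamov construction inside $L_k$, the bound $|L_k|\ge 2^k/(k+1)$, a ball bound of $2^{kH_2(1/4)}$, and a numerical check at $k=128$. The only difference is that you bound the slice ball by the full Hamming ball $\sum_{j<k/4}\binom{k}{j}$. The paper instead counts inside the slice, with $\binom{k/2}{t}^2$ words at distance $2t$, and bounds $\bigl(\sum_{t\le k/8}\binom{k/2}{t}\bigr)^2\le\bigl(3^{k/8}(4/3)^{k/2}\bigr)^2=2^{kH_2(1/4)}$. Both routes give the same exponent. The paper's check $H_2(1/4)<13/16$ via $(4/3)^{12}<2^5$ is equivalent to $3^{12}>2^{19}$, i.e.\ $\log 3>19/12>1.58$, which is exactly your reduction.

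One correction: drop the aside that $\sum_{j\le\lambda k}\binom{k}{j}\le(e/\lambda)^{\lambda k}$ ``also suffices.'' Its exponent is $\tfrac14\log(4e)\approx 0.861$, not $0.856$. That leaves a margin of only $2^{0.014k}\approx 3.6$ at $k=128$, far below $k+1=129$. Only the entropy bound, with $H_2(1/4)\le 0.82$ or better, works at the bottom of the range.
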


\section{Asymptotic Hardness for Goods} \label{sec:goods}
\subsection{Description of the Instance}
Since our hardness result is asymptotic, it is sufficient to consider a large enough $n$.
Let $n\geq 2^{16}$ and let $k = 8\lceil \log n\rceil \geq 128$. 
Let $\cC$ be an error-correcting code satisfying the requirements of Lemma~\ref{lem:code}.
Associate with a codeword $a\in\cC$ the label set
\[
 S(a)=\{2h+a_h:0\le h<k\}\subseteq\{0,\ldots,2k-1\}.
\]
These sets have size $k$, satisfy
\begin{equation}\label{eq:sum}
 \sum_{j\in S(a)}j=k(k-1)+k/2,
\end{equation}
and, for distinct words $a,b$, satisfy the one-sided distance bound
\begin{equation}\label{eq:distance}
 |S(a)\setminus S(b)|=d_H(a,b)\ge k/4.
\end{equation}

\paragraph{Goods and valuations.}
Choose $n$ distinct codewords from $\cC$, one per agent (we can do so since $n\leq |\cC|$). We associate an agent with her corresponding label set. 
Set
\[
 x_j=k(6k+j),\qquad H=17k^2,\qquad T=4kH=68k^3,
 \qquad U=7k^3-k^2/2.
\]
Equation~\eqref{eq:sum} gives $\sum_{j\in S}x_j=U$ for every type $S$.
We define auxiliary weights as follows:
\[
\begin{array}{c|c|c}
\text{good}&\text{indices or number}&w(g)\\ \hline
 Z_j&0\le j<2k&T-H\\
 X_j&0\le j<2k&x_j\\
 Y_j&0\le j<2k&H-x_j\\
 G_h&0\le h<k&H\\
 P&1&T-U\\
 Q&1&T-kH+U\\
 D_t&1\le t\le n-2k-2&T
\end{array}
\]
This is well defined since $n\geq 2k+2$.
All auxiliary weights are strictly positive and are multiples of $k$. The total auxiliary weight is $nT$.
For an agent of type $S$, we define correction functions by
\[
 \rho_S(X_j)=-\ind[j\notin S],\qquad
 \rho_S(Y_j)=\ind[j\in S],\qquad
 \rho_S(Z_j)=1,
\]
\[
 \rho_S(G_h)=-1,\qquad \rho_S(P)=0,\qquad
 \rho_S(Q)=-k,\qquad \rho_S(D_t)=0.
\]
Note that each correction function, when aggregated over all items, sums to  $0$. We say that an $X_j$ good or a $Y_j$ good is labeled with respect to a label set $S$ if $j\in S$. 

Finally, the valuation of an agent with label set $S$ is:
\begin{equation}\label{eq:values}
 v_S(g)=3w(g)+\rho_S(g).
\end{equation}
Every item value is a strictly positive integer.
The proportional share of every agent is $\mu=3T=204k^3$.

For any set $B$ of goods, we write
\[
 w(B)=\sum_{g\in B}w(g),\qquad
 \rho_S(B)=\sum_{g\in B}\rho_S(g).
\]
Thus $v_S(B)=3w(B)+\rho_S(B)$. 

\begin{lemma}\label{lem:mms}
Every agent has MMS exactly $\mu$. 
\end{lemma}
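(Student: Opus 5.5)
The plan is to show that $\mu$ is both an upper and a lower bound on every agent's MMS. The upper bound comes from the proportional share. For the lower bound, I will exhibit, for an arbitrary type $S$, an explicit partition of the goods into $n$ bundles each of value exactly $\mu$ under $v_S$.

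\textbf{Upper bound.} We have $\MMS_S\le \mathrm{PROP}_S = v_S(\items)/n$. Since the total auxiliary weight is $nT$ and each correction function sums to $0$ over all items, this gives $v_S(\items)=3nT+0$. Hence $\mathrm{PROP}_S=3T=\mu$.

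\textbf{Lower bound.} By \eqref{eq:values}, it suffices to build $n$ bundles $B$ with $w(B)=T$ and $\rho_S(B)=0$, since each then has $v_S(B)=3T=\mu$. The bundles are as follows.
\begin{itemize}
\item[(i)] For each $j\notin S$, take $\{Z_j,X_j,Y_j\}$. Its weight is $(T-H)+x_j+(H-x_j)=T$, and its correction is $1-1+0=0$. This gives $k$ bundles, since $|S|=k$ inside $\{0,\dots,2k-1\}$.
\item[(ii)] Fix a bijection $\pi$ between $S$ and the indices $h$ of the goods $G_h$. For each $j\in S$, take $\{Z_j,G_{\pi(j)}\}$, with weight $(T-H)+H=T$ and correction $1-1=0$. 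This gives $k$ more bundles and uses every $G_h$.
\item[(iii)] Take $\{P\}\cup\{X_j:j\in S\}$. Its weight is $T-U+\sum_{j\in S}x_j=T$, using $\sum_{j\in S}x_j=U$ from \eqref{eq:sum}. Its correction is $0$, since $\rho_S(X_j)=0$ for $j\in S$.
\item[(iv)] Take $\{Q\}\cup\{Y_j:j\in S\}$. Its weight is $T-kH+U+(kH-U)=T$, and its correction is $-k+k=0$.
\item[(v)] Each $D_t$ forms a singleton bundle of weight $T$ and correction $0$.
\end{itemize}

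These bundles use every good exactly once. Their number is $k+k+1+1+(n-2k-2)=n$. So every bundle in this partition has value $\mu$ to agent $S$, which gives $\MMS_S\ge\mu$.

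The only real step is finding this partition, in particular handling the labeled items: pairing $Z_j$ with a $G_h$ for $j\in S$, and absorbing the labeled $X_j$ and $Y_j$ into $P$ and $Q$. The defining identities for $U$, $P$ and $Q$ make this work. The remaining verification is routine arithmetic.
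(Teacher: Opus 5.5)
Your proposal is correct and follows essentially the same route as the paper. You use the same explicit partition ($\{Z_j,G_h\}$ for $j\in S$, $\{Z_j,X_j,Y_j\}$ for $j\notin S$, the $P$- and $Q$-bundles absorbing the labeled $X$- and $Y$-goods, and singleton $D_t$'s), each of auxiliary weight $T$ and correction $0$, together with the proportional-share upper bound.
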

\begin{proof}
Fix a type $S$. For each $j\in S$, pair $Z_j$ with a different $G_h$.
For each $j\notin S$, use $\{Z_j,X_j,Y_j\}$. Put every $D_t$ in a bundle by itself. The two remaining 
bundles are
\[
 \{P\}\cup\{X_j:j\in S\},\qquad
 \{Q\}\cup\{Y_j:j\in S\}.
\]
Each bundle has auxiliary weight $T$ and correction zero: the corrections
are respectively $1-1$, $1-1+0$, $0$, $0$, and $-k+k$.
Overall, these are $n$ bundles partitioning
all goods. Therefore, the total auxiliary weight is $nT$, the total value is $n\mu$,
and this partition witnesses MMS at least $\mu$. The MMS is no more than $\mu$ since this is the proportional share, and the MMS cannot exceed the proportional share for additive valuations.
\end{proof}

\subsection{Necessary Conditions for an Approximate MMS Allocation}
\begin{lemma}\label{lem:pairs}
A subset of low goods (goods of type $X$, $Y$, or $G$) has auxiliary weight $H$ if and only
if it is a singleton $\{G_h\}$ or a matched pair $\{X_j,Y_j\}$ (with the same index).
\end{lemma}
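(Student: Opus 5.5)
The plan is a direct weight-counting argument that uses only the explicit auxiliary weights of the low goods. First I will record their ranges. Since $0\le j<2k$, we have $x_j=k(6k+j)\in[6k^2,\,8k^2-k]$. Hence $w(Y_j)=H-x_j\in[9k^2+k,\,11k^2]$, while $w(G_h)=H=17k^2$. All these weights are strictly positive. The ``if'' direction is immediate: $w(G_h)=H$ and $w(X_j)+w(Y_j)=x_j+(H-x_j)=H$.

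For the ``only if'' direction, let $B$ be a set of low goods with $w(B)=H$. If $B$ contains some $G_h$, then positivity of all weights forces $B=\{G_h\}$. Otherwise $B$ consists only of $X$- and $Y$-goods, and I bound how many of each it can contain. Two $Y$-goods already weigh at least $2(9k^2+k)>17k^2$, so $B$ contains at most one $Y$-good. Three $X$-goods weigh at least $18k^2>H$, so $B$ contains at most two $X$-goods.

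I then go through the remaining cases.
\begin{itemize}
\item No $Y$-good and at most one $X$-good: the weight is at most $8k^2<H$.
\item No $Y$-good and two $X$-goods $X_j,X_{j'}$: the weight is $k(12k+j+j')$. This would need $j+j'=5k$, but $j+j'\le 4k-3$.
\item One $Y$-good and no $X$-good: the weight is at most $11k^2<H$.
\item One $Y$-good and two $X$-goods: the weight is at least $9k^2+k+12k^2>H$.
\item One $Y_{j}$ and one $X_{j'}$: the weight is $H-x_j+x_{j'}$. This equals $H$ exactly when $x_{j'}=x_j$, i.e.\ $j=j'$, because $j\mapsto x_j$ is strictly increasing.
\end{itemize}
This leaves only the matched pair $\{X_j,Y_j\}$, which completes the proof.

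There is no real obstacle here. The only step needing care is the case of two $X$-goods, where one must check that the largest possible index sum $4k-3$ keeps $k(12k+j+j')$ strictly below $17k^2$. That check is exactly what the choices $x_j=k(6k+j)$ and $H=17k^2$ are designed to make work.
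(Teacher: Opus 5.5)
Your proof is correct and follows essentially the same route as the paper. It uses the same weight ranges, the observation that a $G$-good alone already weighs $H$, the bounds of at most one $Y$-good and at most two $X$-goods, and distinctness of the $x_j$ to force a matched pair; your explicit case split just unpacks the paper's shorter residual-weight argument for the one-$Y$ case.
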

\begin{proof}
The ranges are
\[
 6k^2\le w(X_j)\le8k^2-k,
 \qquad 9k^2+k\le w(Y_j)\le11k^2.
\]
A $G$-good already supplies $H$. Two $Y$-goods exceed $H$.
With one $Y_j$, the residual auxiliary weight is $x_j<8k^2$, which can only be
supplied by a single $X$-good; distinctness of the $x_j$ forces it to
be $X_j$. Finally, one or two $X$-goods have auxiliary weight below $H=17k^2$,
whereas three or more have auxiliary weight at least $18k^2>H$.
\end{proof}

\begin{proposition}\label{prop:robust}
Every allocation gives some agent value strictly below $\mu-k/24$.
\end{proposition}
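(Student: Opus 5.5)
The plan is to argue by contradiction. Suppose some allocation gives every agent $i$, of type $S_i$, value $v_{S_i}(\mathcal B_i)\ge \mu-k/24$. Using $v_S(B)=3w(B)+\rho_S(B)$, I will first force every bundle to have auxiliary weight exactly $T$. Then I will pin down the bundle structure using Lemma~\ref{lem:pairs}. Finally I will read off from the corrections that the two agents receiving $P$ and $Q$ have label sets that overlap too much, contradicting \eqref{eq:distance}.

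\emph{Step 1: every bundle has weight $T$.} All auxiliary weights are multiples of $k$. So if $w(\mathcal B_i)<T$ then $3w(\mathcal B_i)\le 3T-3k$, and agent $i$ would need $\rho_{S_i}(\mathcal B_i)\ge 3k-k/24$. The only positive corrections are the $+1$ on the $2k$ goods $Z_j$ and on the $k$ labeled goods $Y_j$, so the total positive correction is at most $3k$. Reaching $3k-k/24$ therefore requires many $Z$-goods in one bundle. Since $w(Z_j)=T-H>T/2$, even two $Z$-goods give weight above $T$, a contradiction. Hence $w(\mathcal B_i)\ge T$ for all $i$. Since the total weight is $nT$, every bundle has weight exactly $T$, and consequently $\rho_{S_i}(\mathcal B_i)\ge -k/24$ for every $i$.

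\emph{Step 2: structure.} The low goods have total weight $2kH+kH=3kH<T$, so every bundle contains a high good ($Z$, $P$, $Q$, or $D$). There are exactly $n$ high goods, so each bundle contains exactly one. A $D$-bundle is a singleton. A $Z$-bundle must be completed by low goods of weight $H$, so by Lemma~\ref{lem:pairs} it gets either some $G_h$ or a matched pair $\{X_j,Y_j\}$. The $Z$-bundles consume weight $2kH$ of low goods, and $P$ and $Q$ need $U+(kH-U)=kH$, which is exactly what remains. A short check of what $P$ and $Q$ can absorb should show that all $k$ goods $G_h$ go to $Z$-bundles. Then exactly $k$ indices are used in matched pairs, and the remaining set $R$, of size $k$, has its goods $X_j,Y_j$ ($j\in R$) split between $P$ and $Q$. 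Write $P$'s low goods as $\{X_j:j\in A\}\cup\{Y_j:j\in B'\}$ with $A,B'\subseteq R$; $Q$ receives the rest of the $R$-goods.

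\emph{Step 3: corrections and distance.} Let $S=S_P$ and $S'=S_Q$ be the types of the agents holding $P$ and $Q$; these are distinct agents, hence distinct codewords. Their correction conditions are:
\begin{itemize}
\item for $P$: $\rho_S=-|A\setminus S|+|B'\cap S|\ge -k/24$;
\item for $Q$: $\rho_{S'}=-k+|(R\setminus B')\cap S'|-|(R\setminus A)\setminus S'|\ge -k/24$.
\end{itemize}
The $Q$ condition forces $|R\setminus B'|\ge 23k/24$ and $|R\cap S'|\ge 23k/24$, i.e.\ $B'$ is tiny and almost all of $R$ lies in $S'$. The weight equation $\sum_{A}x_j+\sum_{B'}(H-x_j)=U$ has scale $U\approx 7k^3$, with $x_j\in[6k^2,8k^2]$ and $H-x_j\approx 10k^2$. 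Since $|B'|$ is tiny, this forces $|A|$ to be close to $k$, i.e.\ $A$ is nearly all of $R$. The $P$ condition then gives $|A\setminus S|\le k/24+|B'|$, so almost all of $R$ lies in $S$ as well. Because $|S|=|S'|=|R|=k$, it follows that $|S\setminus S'|\le|S\setminus R|+|R\setminus S'|$ is a small multiple of $k/24$. This is below $k/4$, contradicting \eqref{eq:distance}.

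The main obstacle is Step 3's bookkeeping: the exact $\pm$ slack in $|A|$ coming from the weight equation has to be tracked. The range of $x_j$ only pins $|A|$ to roughly $7k/8\le|A|\le 7k/6$, corrected by $|B'|$. The $1/24$ constants must be accumulated so that the total overlap deficit stays strictly below $k/4$. If the crude range bound is too weak, I would instead use the exact identity $\sum_{j\in S}x_j=U$: compare $\sum_A x_j$ with $\sum_{S\cap R}x_j$, using that $x_j=k(6k+j)$ is affine in $j$. Step 2's exclusion of $G$-goods from $P$ and $Q$ should likewise follow from the $\rho$ lower bounds, since each $G$ there costs $-1$.
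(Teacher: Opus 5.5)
Your Step 1 is sound, and your overall strategy (force weight exactly $T$, then play the corrections of the $P$- and $Q$-recipients against the code distance) is the paper's. The proposal has two gaps, however.

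\textbf{First gap: Step 2.} The claim that every $G_h$ lands in a $Z$-bundle is not justified, and the reason you give fails. A $G$-good in $\mathcal B_P\cup\mathcal B_Q$ costs only $-1$, while the slack is $\ell=k/24$. So the correction bounds allow up to about $\ell$ such goods. The weights do not forbid them either. In $\mathcal B_Q$, the labeled goods $Y_{j_0},Y_{j_1},Y_{j_2}$ can be replaced by some $G_h$ and an unlabeled $Y_{j'}$ whenever $j_0+j_1+j_2=5k+j'$, since then $x_{j_0}+x_{j_1}+x_{j_2}=H+x_{j'}$. The correction cost is only $4$. If $S_P=S_Q$ were allowed, then for suitable labels this extends to a full allocation giving every agent at least $\mu-4$. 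Hence $g=0$ cannot follow from any check that ignores the distance between $S_P$ and $S_Q$.

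You must therefore carry $g=g_P+g_Q$, the number of $G$-goods in $\mathcal B_P\cup\mathcal B_Q$, through Step 3. Your $R$ then has size $k-g$, the two corrections acquire terms $-g_P$ and $-g_Q$, and $P$'s weight equation acquires $g_PH$. The paper does exactly this. It never shows $g=0$, only $g\le d$, where $d$ is the number of labeled $Y$-goods missing from $\mathcal B_Q$.

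\textbf{Second gap: Step 3.} The quantitative core is not carried out. As you suspect, the range $6k^2\le x_j<8k^2$ alone gives only $|A|\gtrsim 0.82k$ when $|B'|$ is near $\ell$. That bounds $|S_Q\setminus S_P|$ only by roughly $0.31k>k/4$. What is missing is an exact weight identity that turns correction slack into a count of misplaced goods.

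The paper works on $Q$'s side. It uses $w(\mathcal B_Q\setminus\{Q\})=kH-U=\sum_{j\in S_Q}(H-x_j)$. Hence the low goods of $\mathcal B_Q$ other than its labeled $Y$-goods weigh exactly as much as the $d$ missing labeled $Y$-goods, at most $11k^2d$. So there are at most $2d$ of them, which caps the $X$-goods in $\mathcal B_Q$ and yields $|S_Q\setminus S_P|\le 4d+\ell\le5\ell$.

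Your fallback can also be made to work on $P$'s side, and affineness of $x_j$ is not needed.
\begin{itemize}
\item $Q$'s correction condition gives $|B'|+g\le\ell$ and $|R\setminus S_Q|\le\ell$.
\item $P$'s correction condition gives $|A\setminus S_P|\le\ell+|B'|-g_P$.
\item Subtracting $\sum_{j\in S_P}x_j=U$ from $P$'s weight equation gives $\sum_{j\in S_P\setminus A}x_j=\sum_{j\in A\setminus S_P}x_j+\sum_{j\in B'}(H-x_j)+g_PH$.
\item Hence $|S_P\setminus A|\le\frac43\ell+\frac{19}{6}|B'|+\frac32g_P\le\frac{27}{6}\ell$.
\item Since $A\subseteq R$, this gives $|S_P\setminus S_Q|\le\frac{27}{6}\ell+\ell=\frac{11}{2}\ell<k/4$, the desired contradiction.
\end{itemize}
Until one of these estimates is written down with $g$ tracked, the argument is incomplete.
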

\begin{proof}
Put $\ell=k/24$ and suppose, for contradiction, that every agent receives
at least $\mu-\ell$. 

\paragraph{Exact auxiliary weights.}
Call $Z_j,P,Q,D_t$ high goods (and the remaining low goods); there are exactly $n$ of them.
All low goods together have a total auxiliary weight of $3kH$.
For any agent, the only positive low-good corrections are on her $k$
labeled $Y$-goods. Thus a low-only bundle has value at most
\[
 9kH+k=153k^3+k<\mu-\ell.
\]
Every bundle must contain a high good, so each contains exactly one.
Its correction is at most $k+1$. All auxiliary weights are multiples of
$k$, and hence a bundle of auxiliary weight below $T$ has value at most
\[
 3(T-k)+k+1=\mu-2k+1<\mu-\ell.
\]
Every bundle therefore has auxiliary weight at least $T$. Their total auxiliary weight is
$nT$, so each has auxiliary weight exactly $T$. In particular, the $D_t$ items
are singletons, and every allocated bundle has correction at least $-\ell$.

\paragraph{The bundles containing $P$ and $Q$.}
Let $\mathcal B_P$ be the allocated bundle containing the good $P$, and
let $\mathcal B_Q$ be the allocated bundle containing the good $Q$.
Write $S_P$ and $S_Q$ for the types of their respective recipients.
These bundles have different recipients: if one bundle contained both
$P$ and $Q$, it would contain two high goods, contrary to what we just
proved. In particular, $S_P\ne S_Q$, since all agents have distinct types.
We have
\[
 w(\mathcal B_P)=w(\mathcal B_Q)=T,
 \qquad
 \rho_{S_P}(\mathcal B_P)\ge-\ell,
 \qquad
 \rho_{S_Q}(\mathcal B_Q)\ge-\ell.
\]

For each $j$, the bundle containing $Z_j$ has total auxiliary weight $T$.
Since $w(Z_j)=T-H$, its low goods must have total auxiliary weight exactly $H$.
Lemma~\ref{lem:pairs} therefore gives exactly two possibilities: its low
goods are a singleton $\{G_h\}$ or a matched pair $\{X_{j'},Y_{j'}\}$.
The pair's label $j'$ need not equal $j$.

Let
\[
 g=\bigl|\{h:G_h\in\mathcal B_P\cup\mathcal B_Q\}\bigr|
\]
be the number of $G$-goods in the two distinguished bundles. There are
$k$ $G$-goods in total. Every one of the other $k-g$ $G$-goods belongs to
a bundle containing a $Z$-good, because $D_t$ goods are singletons and
all goods are allocated. Each such bundle contains exactly one $G$-good.
Hence exactly $k-g$ of the $2k$ bundles containing $Z$-goods use a
$G$-good. Each of the remaining
\[
 2k-(k-g)=k+g
\]
bundles uses one matched pair. 

Define the set of labels of $X$ and $Y$ items that are in $\mathcal B_P\cup\mathcal B_Q$  by
\[
 F=\{j\in\{0,\ldots,2k-1\}:
                  X_j,Y_j\in\mathcal B_P\cup\mathcal B_Q\}.
\]
Note that either both $X_j$ and $Y_j$ are in $\mathcal B_P\cup\mathcal B_Q$ or both are not in $\mathcal B_P\cup\mathcal B_Q$. It holds that
\[
 |F|=2k-(k+g)=k-g,
\]
and the low goods in $\mathcal B_P\cup\mathcal B_Q$ are exactly
\[
 \{X_j,Y_j:j\in F\}
 \quad\text{together with  }g\text{  }G\text{-goods}.
\]

\paragraph{Few labeled $Y$-goods can be missing from $\mathcal B_Q$.}
The recipient of $\mathcal B_Q$ has type $S_Q$ and has exactly $k$
labeled $Y$-goods, namely $\{Y_j:j\in S_Q\}$. Define
\[
 I_Q=\{j\in S_Q:Y_j\in\mathcal B_Q\},
 \qquad d=k-|I_Q|.
\]
Thus $d$ counts the labeled $Y$-goods missing from $\mathcal B_Q$.
Let $\xi_Q$ be the number of goods $X_j\in\mathcal B_Q$ with
$j\notin S_Q$, and let $g_Q$ be the number of $G$-goods in
$\mathcal B_Q$. The good $Q$ contributes correction $-k$; the
$|I_Q|=k-d$ labeled $Y$-goods contribute $+1$ each; the $\xi_Q$
unlabeled $X$-goods and the $g_Q$ $G$-goods contribute $-1$ each.
All other goods in this bundle have correction zero. Consequently,
\[
 \rho_{S_Q}(\mathcal B_Q)
 =-k+(k-d)-\xi_Q-g_Q
 =-d-\xi_Q-g_Q.
\]
Since this correction is at least $-\ell$, we obtain
\[
 d+\xi_Q+g_Q\le \ell,
 \qquad\text{and in particular}\qquad 0\le d\le \ell.
\]

Every label in $I_Q$ belongs to $F$, because its $Y$-good is in
$\mathcal B_Q$. Thus $I_Q\subseteq S_Q\cap F$. This inclusion gives
two separate consequences. First, a label in $S_Q\setminus F$ cannot
belong to $I_Q$, so
\[
 S_Q\setminus F\subseteq S_Q\setminus I_Q,
 \qquad |S_Q\setminus I_Q|=d.
\]
Second, the $k-d$ labeled $Y$-goods in $\mathcal B_Q$ are among the
$|F|=k-g$ $Y$-goods available in $\mathcal B_P\cup\mathcal B_Q$.
Therefore $k-d\le k-g$, which is equivalent to $g\le d$.
Together these two observations prove
\begin{equation}\label{eq:missing}
 |S_Q\setminus F|\le d,\qquad g\le d.
\end{equation}

\paragraph{Few $X$-goods can be in $\mathcal B_Q$.}
The low goods in $\mathcal B_Q$ have total auxiliary weight
\[
 w(\mathcal B_Q\setminus\{Q\})
 =T-w(Q)
 =kH-U.
\]
The full collection of labeled $Y$-goods for this recipient has
exactly the same auxiliary weight:
\[
 \sum_{j\in S_Q}w(Y_j)
 =\sum_{j\in S_Q}(H-x_j)
 =kH-\sum_{j\in S_Q}x_j
 =kH-U.
\]
Let
\[
 \mathcal E_Q
 =\mathcal B_Q\setminus
       \bigl(\{Q\}\cup\{Y_j:j\in I_Q\}\bigr).
\]
This is the set of low goods in $\mathcal B_Q$ other than its labeled
$Y$-goods. Subtracting the auxiliary weight of the labeled $Y$-goods actually
present from the preceding two equal totals gives
\begin{align*}
 w(\mathcal E_Q)
 &=w(\mathcal B_Q\setminus\{Q\})
       -\sum_{j\in I_Q}w(Y_j)\\
 &=\sum_{j\in S_Q}w(Y_j)-\sum_{j\in I_Q}w(Y_j)\\
 &=\sum_{j\in S_Q\setminus I_Q}w(Y_j).
\end{align*}
The final sum contains exactly $d$ terms, each at most $11k^2$.
On the other hand, every good in $\mathcal E_Q$ is a low good and
has auxiliary weight at least $6k^2$. It follows that
\[
 6k^2|\mathcal E_Q|\le w(\mathcal E_Q)\le 11k^2d,
 \qquad
 |\mathcal E_Q|\le\frac{11}{6}d\le2d.
\]
Every $X$-good in $\mathcal B_Q$ belongs to $\mathcal E_Q$.
We have therefore proved
\begin{equation}\label{eq:qA}
 \bigl|\{j:X_j\in\mathcal B_Q\}\bigr|\le2d.
\end{equation}

\paragraph{Most labels of $S_Q$ have their $X$-goods in $\mathcal B_P$.}
Define
\[
 R=\{j\in S_Q:X_j\in\mathcal B_P\}.
\]
A label $j\in S_Q$ fails to belong to $R$ in exactly one of two ways.
Either $j\notin F$, in which case its $X$-good belongs to a bundle
containing a $Z$-good; or $j\in F$ and its $X$-good is in
$\mathcal B_Q$. There is no third possibility, since for every
$j\in F$ the good $X_j$ is allocated to one of the two distinguished
bundles. Thus
\[
 S_Q\setminus R
 =(S_Q\setminus F)
   \,\dot\cup\,
   \{j\in S_Q\cap F:X_j\in\mathcal B_Q\},
\]
where $\dot\cup$ denotes disjoint union. By Equations~\eqref{eq:missing} and \eqref{eq:qA}, the first set has size at most
$d$ and the second at most $2d$. Hence
\[
 |S_Q\setminus R|\le3d.
\]

\paragraph{Most of these labels must also belong to $S_P$.}
There are exactly $|F|=k-g$ $Y$-goods in
$\mathcal B_P\cup\mathcal B_Q$. We already know that
$\mathcal B_Q$ contains at least $k-d$ of them, namely those with
labels in $I_Q$. Thus
\[
 \bigl|\{j:Y_j\in\mathcal B_P\}\bigr|
 \le(k-g)-(k-d)=d-g\le d.
\]
Let $\xi_P$ count the goods $X_j\in\mathcal B_P$ with $j\notin S_P$,
let $\eta_P$ count the goods $Y_j\in\mathcal B_P$ with $j\in S_P$,
and let $g_P$ count the $G$-goods in $\mathcal B_P$.
The good $P$ has correction zero, and hence
\[
 \rho_{S_P}(\mathcal B_P)=\eta_P-\xi_P-g_P\ge-\ell.
\]
Since $\eta_P$ is at most the total number of $Y$-goods in
$\mathcal B_P$, the preceding bound gives $\eta_P\le d$.
Rearranging the correction inequality now yields
\[
 \xi_P\le\eta_P+\ell-g_P\le d+\ell.
\]
For each label $j\in R\setminus S_P$, its $X$-good lies in
$\mathcal B_P$ by the definition of $R$, and its label is outside
$S_P$. This good is therefore counted by $\xi_P$. So
\[
 |R\setminus S_P|\le\xi_P\le d+\ell.
\]

\paragraph{The final distance contradiction.}
Every label in $S_Q\setminus S_P$ either lies outside $R$, or lies
in $R\setminus S_P$. More explicitly, since $R\subseteq S_Q$,
\[
 S_Q\setminus S_P
 =\bigl((S_Q\setminus R)\setminus S_P\bigr)
       \,\dot\cup\,(R\setminus S_P).
\]
The first set has size at most $|S_Q\setminus R|\le3d$, and the
second has size at most $d+\ell$. Therefore
\begin{align*}
 |S_Q\setminus S_P|
 &\le |S_Q\setminus R|+|R\setminus S_P|\\
 &\le 3d+(d+\ell)\\
 &=4d+\ell\\
 &\le5\ell=\frac{5k}{24}<\frac{k}{4}.
\end{align*}
However, $\mathcal B_P$ and $\mathcal B_Q$ have distinct recipients,
whose distinct types were chosen from the code. The one-sided distance
bound~\eqref{eq:distance} therefore gives
$|S_Q\setminus S_P|\ge k/4$, a contradiction.
\end{proof}

\subsection{Putting It All Together}
\begin{proof}[Proof of Theorem~\ref{thm:main}]
Fix an arbitrary allocation $(\mathcal B_i)_{i=1}^n$.
Every agent has $\MMS_i=\mu=204k^3$ by Lemma~\ref{lem:mms}.
Proposition~\ref{prop:robust} gives an agent $i$ with
$v_i(\mathcal B_i)<\mu-k/24$. Dividing by this agent's MMS yields
\[
 \frac{v_i(\mathcal B_i)}{\MMS_i}
 <1-\frac{k/24}{204k^3}
 =1-\frac{1}{4896k^2}.
\]
Using $k\le9\log(n)$ and $4896\cdot81=396576<400000$, we get
\[
 \frac{1}{4896k^2}
 \ge\frac{1}{396576\log^2(n)}
 >\frac{1}{400000\log^2(n)}.
\]
Substituting this lower bound on the loss proves the claimed
approximation upper bound for the arbitrary allocation.
\end{proof}

\section{Constant Hardness for Goods}
\label{sec:constant-gap}

We construct an instance with four agents and eleven items in which
every agent has MMS equal to $21$, but every allocation gives some
agent value at most $20$. In particular, this establishes
$\alpha_4^{\mathrm{MMS}}\leq 20/21$.

\paragraph{Description of the instance.}

There are two agents with valuation $v_1$ and two agents with
valuation $v_2$. The item set is
\[
    \items=\{a,b,c,d,e,f,g,h,i,j,k\}.
\]
The valuations are additive and specified by the following table.
\[
\begin{array}{c|rrrrrrrrrrr}
    \text{Item} & a & b & c & d & e & f & g & h & i & j & k\\
    \hline
    v_1 & 1 & 17 & 3 & 10 & 1 & 10 & 7 & 7 & 7 & 3 & 18\\
    v_2 & 1 & 14 & 5 & 10 & 1 & 9 & 6 & 7 & 7 & 5 & 19
    \end{array}
\]
Notice that
\[
    v_1(\items)=v_2(\items)=84.
\]

\begin{proof}[Proof of Theorem~\ref{thm:main2}]
For valuation $v_1$, the partition
\[
    \bigl(
        \{a,b,c\},
        \{d,e,f\},
        \{g,h,i\},
        \{j,k\}
    \bigr)
\]
consists of four bundles of value $21$.
For valuation $v_2$, the partition
\[
    \bigl(
        \{d,g,j\},
        \{a,e,k\},
        \{b,h\},
        \{c,f,i\}
    \bigr)
\]
also consists of four bundles of value $21$.
Thus, every agent has MMS at least $21$. Since every agent's
proportional share is $84/4=21$, equality follows.

The exhaustive verification in Appendix~\ref{app:code} shows that every allocation gives some agent value at most $20$.
\end{proof}

\section{Asymptotic Hardness for Chores}
\begin{proof}[Proof of Theorem~\ref{thm:chores-hardness}]
Take the goods construction with valuations
$v_i(g)=3w(g)+\rho_i(g)$, and retain its notation
$\mu=3T$ and $\ell=k/24$. On the same items, define additive costs
\[
 c_i(g)=3w(g)-\rho_i(g).
\]
These costs are strictly positive integers. Every bundle in
agent $i$'s MMS partition (defined for the goods setting in Section~\ref{sec:goods}) has auxiliary weight $T$
and correction zero, so its cost remains $\mu$. Moreover,
$c_i(M)=n\mu$. This partition and the average-cost
lower bound therefore give
\[
 \operatorname{MMS}^{\mathrm{ch}}_i=\mu
 \qquad\text{for every }i.
\]

Suppose, toward a contradiction, that an allocation $(A_i)_{i=1}^n$
satisfies $c_i(A_i)\le\mu+\ell$ for every $i$. We first show that
$w(A_i)=T$ for every $i$.

Recall that there are exactly $n$ high items, namely the items
of types $Z,P,Q,D$. Each has auxiliary weight at least $3T/4$
and correction at most $1$. Hence a bundle containing two high
items costs at least
\[
 2\left(\frac{9T}{4}-1\right)>\mu+\ell.
\]
Thus each allocated bundle contains at most one high item,
and consequently exactly one. Its correction is therefore at
most $k+1$: at most $1$ from its high item and at most $k$ from
selected $Y$-items.

All auxiliary weights are multiples of $k$. If $w(A_i)>T$, then
\[
 c_i(A_i)
 \ge 3(T+k)-(k+1)
 =\mu+2k-1
 >\mu+\ell,
\]
a contradiction. Thus $w(A_i)\le T$ for every $i$.
Since $w(M)=nT$, equality holds throughout.

It follows that the same allocation, viewed as a goods allocation,
satisfies
\[
 v_i(A_i)
 =6w(A_i)-c_i(A_i)
 =2\mu-c_i(A_i)
 \ge\mu-\ell
 \qquad\text{for every }i.
\]
This contradicts the goods-case hardness proposition (Proposition~\ref{prop:robust}).

Therefore, every chores allocation gives some agent cost
greater than $\mu+\ell$. The parameter estimates from the goods
proof give
\[
 \frac{\ell}{\mu}
 =\frac{1}{4896k^2}
 >\frac{1}{400000(\log n)^2},
\]
which concludes the proof.
\end{proof}

\section{Constant Hardness for Chores}
\label{sec:constant-chores}

We construct an instance with four agents and eleven chores in which
all agents have MMS equal to their proportional share, namely $90$,
but every allocation gives some agent cost at least $93$.
In particular, this establishes
$\beta^{\mathrm{MMS}}_4 \ge 93/90 = 31/30$.

\paragraph{Description of the instance.}
There are two agents with cost function $c_1$ and two agents with
cost function $c_2$. The chore set is
\[
M=\{a,b,c,d,e,f,g,h,i,j,k\}.
\]
The costs are additive and specified by the following table.
\[
\begin{array}{c|rrrrrrrrrrr}
\text{Item} &a&b&c&d&e&f&g&h&i&j&k\\ \hline
c_1 &19&40&31&37&16&37&25&40&25&31&59\\
c_2 &21&45&27&39&15&39&24&45&24&27&54
\end{array}
\]
Notice that
\[
c_1(M)=c_2(M)=360.
\]

\begin{proof}[Proof of Theorem~\ref{thm:constant-chores}]
For cost function $c_1$, the partition
\[
\bigl(\{a,b,c\},\{d,e,f\},\{g,h,i\},\{j,k\}\bigr)
\]
consists of four bundles of cost $90$. For cost function $c_2$,
the partition
\[
\bigl(\{d,g,j\},\{a,e,k\},\{b,h\},\{c,f,i\}\bigr)
\]
also consists of four bundles of cost $90$.
Together with the average-cost lower bound, these partitions give
\[
\mathrm{MMS}^{\mathrm{ch}}_i
=\mathrm{PROP}^{\mathrm{ch}}_i
=\frac{360}{4}=90
\qquad\text{for every agent }i.
\]

The exhaustive verification in Appendix~\ref{app:code}
computes
\[
\min_{(B_1,\ldots,B_4)\in\mathcal{A}_4(M)}
\max_{i\in[4]}c_i(B_i)=93.
\]
Consequently, every allocation gives some agent cost at least a
$93/90=31/30$ factor of her MMS, and hence
\[
\beta^{\mathrm{MMS}}
\ge \beta^{\mathrm{MMS}}_4
\ge \frac{31}{30},
\]
which concludes the proof.
\end{proof}

\section*{Acknowledgments}

\paragraph{Use of AI tools.}
The authors used ChatGPT (OpenAI) to assist with the development and
refinement of mathematical constructions and proof arguments,
the development of computational search and verification code,
and the drafting and revision of the manuscript.
The author assumes responsibility for all content, including the
mathematical claims, proofs, computational results, and references.
\bibliographystyle{abbrvnat}
\bibliography{bib}

\appendix
\section{Missing Proofs}
\begin{proof}[Proof of Lemma~\ref{lem:code}]
We show the existence of such a code by greedily adding balanced codewords and removing the words at distance
less than $k/4$.

At each step, for the added codeword  $u$, the number of  words in $L_k$ of distance $2t$ from $u$ is $\binom{k/2}{t}^{\!2}$ (the distance must be even). Thus, the number of removed words at each step is at most 
\begin{equation}
\label{eq:bound1}    
 \sum_{t=0}^{k/8}\binom{k/2}{t}^{\!2}
 \le\left(\sum_{t=0}^{k/8}\binom{k/2}{t}\right)^2,
\end{equation}

Let $H_2(p)=-p\log p-(1-p)\log(1-p)$ be the standard entropy function. 
It holds that \begin{equation}
\label{eq:bound2}
 \sum_{t=0 }^{ k/8}\binom{k/2}t
 \le 3^{k/8}\sum_{t=0}^{k/2}\binom{k/2}t3^{-t}
 =3^{k/8}(4/3)^{k/2}=2^{kH_2(1/4)/2}.
\end{equation}
Moreover,
\begin{equation}
\label{eq:bound3}    
 H_2(1/4)=\frac12+\frac34\log(4/3)<\frac{13}{16},
\end{equation}
since $(4/3)^{12}<2^5$.

The size of $L_k$ is at least $2^k/(k+1)$. Combining Equations~\eqref{eq:bound1}, \eqref{eq:bound2}, and \eqref{eq:bound3} implies that the constructed greedy code has size at least
\[
 \frac{2^{3k/16}}{k+1}\ge2^{k/8},
\]
where the last inequality follows from $k+1\le2^{k/16}$ for $k\ge128$.
\end{proof}

\section{Computational Verification of the Constant-Gap Instances}
\label{app:code}
\subsection{Code Verification for Goods}

The following Python code enumerates
all $4^{11}$ allocations of the eleven items
among the four agents. For each allocation, it computes the
minimum value received by an agent and records the largest
such minimum over all allocations. 

\begin{lstlisting}[
    language=Python,
    basicstyle=\ttfamily\footnotesize,
    keywordstyle=\bfseries,
    commentstyle=\itshape,
    numbers=left,
    numberstyle=\tiny,
    numbersep=8pt,
    breaklines=true,
    showstringspaces=false,
    keepspaces=true,
    columns=fullflexible,
    tabsize=4,
    captionpos=b,
    caption={Exhaustive verification of the constant-gap goods instance.},
    label={lst:constant-gap-verification}
]
from itertools import product

# Items are ordered as a, b, c, d, e, f, g, h, i, j, k.
values_1 = (1, 17, 3, 10, 1, 10, 7, 7, 7, 3, 18)
values_2 = (1, 14, 5, 10, 1, 9, 6, 7, 7, 5, 19)

# There are two agents of each type.
values = [values_1, values_1, values_2, values_2]

n_goods = 11
n_agents = 4
best_min = 0

for assignment in product(range(n_agents), repeat=n_goods):
    totals = [0] * n_agents
    for good, agent in enumerate(assignment):
        totals[agent] += values[agent][good]
    min_value = min(totals)
    if min_value > best_min:
        best_min = min_value

print("Best minimum value:", best_min)
\end{lstlisting}

The program outputs:
\begin{verbatim}
Best minimum value: 20
\end{verbatim}

\subsection{Code Verification for Chores}

The following Python code enumerates all $4^{11}$ allocations of the
eleven chores among the four agents. For each allocation, it computes
the maximum cost incurred by an agent and records the smallest such
maximum over all allocations. 

\begin{lstlisting}[
    language=Python,
    basicstyle=\ttfamily\footnotesize,
    keywordstyle=\bfseries,
    commentstyle=\itshape,
    numbers=left,
    numberstyle=\tiny,
    numbersep=8pt,
    breaklines=true,
    showstringspaces=false,
    keepspaces=true,
    columns=fullflexible,
    tabsize=4,
    captionpos=b,
    caption={Exhaustive verification of the constant-gap chores instance.},
    label={lst:constant-gap-verification-chores}
]
from itertools import product

# Items are ordered as a, b, c, d, e, f, g, h, i, j, k.
values_1 = (19, 40, 31, 37, 16, 37, 25, 40, 25, 31, 59)
values_2 = (21, 45, 27, 39, 15, 39, 24, 45, 24, 27, 54)

# There are two agents of each type.
values = [values_1, values_1, values_2, values_2]

n_goods = 11
n_agents = 4
best_max = 360

for assignment in product(range(n_agents), repeat=n_goods):
    totals = [0] * n_agents
    for good, agent in enumerate(assignment):
        totals[agent] += values[agent][good]
    max_value = max(totals)
    if max_value < best_max:
        best_max = max_value
 
print("Best maximum value:", best_max)
\end{lstlisting}

The program outputs:
\begin{verbatim}
Best maximum value: 93
\end{verbatim}

\end{document}